\newtheorem{proposition}{Proposition}
\newtheorem{theorem}[proposition]{Theorem}
\newtheorem{corollary}[proposition]{Corollary}
\newtheorem{lemma}[proposition]{Lemma}
\newcommand{\chern}{Chernoff/\nolinebreak[0]Hoeffding}
\newcommand{\expec}[1]{\mathbb{E} \left [#1 \right ] }
\newcommand{\prob}[1]{\mathbb{P} \left [ #1 \right ] }
\newcommand{\bindist}{\mathbf{Bin}}
\newcommand{\comment}[1]{}
\newcommand{\rank}{\mathop{\mathrm{rk}}}
\newcommand{\eat}[1]{}
\newlength{\MyLength}
\title{Sorting and Selection with Random Costs}
\author{Stanislav Angelov
\thanks{Department of Computer and Information Science,
University of Pennsylvania,  {\tt angelov@cis.upenn.edu}.}
\and Keshav Kunal
\thanks{Department of Computer and Information Science,
University of Pennsylvania, {\tt kkunal@cis.upenn.edu}.}
\and Andrew McGregor
\thanks{Information Theory and Applications Center, University of California, San Diego,  {\tt andrewm@ucsd.edu}.}
    }
\begin{document}
\maketitle

\begin{abstract}
There is a growing body of work on sorting and selection in models other than
the unit-cost comparison model. This work is the first treatment of a natural
stochastic variant of the problem where the cost of comparing two elements is a
random variable. Each cost is chosen independently and is known to the
algorithm. In particular we consider the following three models: each cost is
chosen uniformly in the range $[0,1]$, each cost is 0 with some probability $p$
and 1 otherwise, or each cost is 1 with probability $p$ and infinite otherwise.
We present lower and upper bounds (optimal in most cases) for these problems.
We obtain our upper bounds by carefully designing algorithms to ensure that the
costs incurred at various stages are independent and using properties of random
partial orders when appropriate.
\end{abstract}

\section{Introduction}
\label{sec:intro}

In the relatively recent area of priced information
\cite{CL05a,CL05b,CFGKRS02}, there is a set of \emph{facts} each of which can
be \emph{revealed} at some cost. The goal is to pay the least amount such that
the revealed facts allow some inference to be made. A specific problem in this
framework, posed by Charikar et al.~\cite{CFGKRS02}, is that of sorting and
selection where each comparison has an associated cost. Here we are given a set
$V$ of $n$ elements and the cost of comparing two elements $u$ and $v$ is
$c_{(u,v)}$. This cost is known to the algorithm. We wish to design algorithms
for sorting and selection that minimize the total cost of the comparisons
performed. Results can be found in \cite{KK03,GK01,GK05} where the performance
of the algorithms is measured in terms of competitive analysis. In all cases
assumptions are made about the edge costs, e.g., that there is an underlying
monotone structure \cite{KK03,GK01} or metric structure \cite{GK05}.

A related problem that predates the study of priced information is the problem
of \emph{sorting nuts and bolts} \cite{ABFKNO94,KMS98}. This is a problem that
may be faced by  ``any disorganized carpenter who has a mixed pile of bolts and nuts and wants to find the corresponding pairs of bolts and nuts" according to
the authors of \cite{ABFKNO94}. The problem amounts to sorting two sets, $X$
and $Y$, each with $n$ elements given that comparisons are only allowed between
$u\in X$ and $v\in Y$. It can be shown that this problem can be generalized to
the priced information problem in which comparison costs are either 1 or
$\infty$.

In this paper we study a natural stochastic variant of the sorting problem. We
consider each comparison cost to be chosen independently at random.
Specifically, we consider the following three models:
\begin{enumerate}
\item[(a)] Uniform Comparison Costs: $c_{(u,v)}$ is chosen uniformly in the range $[0,1]$,
\item[(b)] Boolean Comparison Costs: $c_{(u,v)}=0$ with probability $p$ and $1$ otherwise, and
\item[(c)] Unit and Infinite Comparison Costs:  $c_{(u,v)}=1$ with probability $p$ and $\infty$ otherwise.
\end{enumerate}
This first model is in the spirit of the work on calculating the expected cost
of the minimum spanning tree \cite{F85}. The second and third models are
related to the study of random partially ordered sets (see \cite{B93} for an
overview) and linear extensions \cite{F76,KK95,ABBJ94}. Specifically, in
Model~(b), the free comparisons define a partial order $(V,\preccurlyeq)$ that
is chosen according to the \emph{random graph model}. To sort $V$ we need to do the minimum number of remaining comparisons to determine the linear extension,
or total order, defined by the remaining comparisons. In Model~(c) we have the
problem of inferring properties of the random partial order $(V,\preccurlyeq)$
defined by the cost $1$ edges.


\subsection{A Motivation from Game Theory}
\label{sec:gametheory} The framework of priced information lends itself
naturally to a game theoretic treatment where there are numerous sellers each
owning one or more facts. Some facts will be, in a sense, more valuable than
others. In the case of sorting, the value of a comparison $(u,v)$ is inversely
related to $|\{w: u<w<v \mbox{ or } v<w<u\}|$ because for each such $w$, the
comparisons $(u,w)$ and $(w,v)$ together provide an alternative way of implying
$(u,v)$. How should sellers price their information in an effort to maximize
their profit? Herein lies the dilemma --- if the pricing of the facts is
strictly monotonic with their value, the buyer can infer the sorted order from
the prices themselves and by performing a single (cheapest) comparison! Yet, if
there is no correlation, the seller is not capitalizing on the value of the
information they have to sell. It seems likely that the optimum pricing of a
fact will be a non-deterministic function of the value. While a treatment of
the game theoretic problem seems beyond our reach at this time, we feel that a
first step will be to find optimal buyer strategies when the price of each fact
is chosen randomly and independently of the value of the fact.

\subsection{Our Results}

For $p=1/2$, our results are summarized in Table~\ref{tab:results1}. In
general, we will present bounds in terms of both $n$ and $p$. Note that rather
than using a competitive analysis of our algorithms (as in
\cite{KK03,GK01,GK05}) we estimate the expected cost of our algorithms and the
expected cost of the respective minimum certificate.

\begin{table}[th]

\begin{center}
\begin{tabular}{|c|c|c|c|c|c|c|}
\hline
 & \multicolumn{2}{c|}{Max and Min} &  \multicolumn{2}{c|}{Selection} &  \multicolumn{2}{c|}{Sorting}  \\
 & \tiny{Upper Bound} & \tiny{Min. Certificate} &  \tiny{Upper Bound} & \tiny{Min. Certificate} & \tiny{Upper Bound} & \tiny{Min. Certificate}  \\ \hline
$c_{(u,v)}=1/2$ & $O(n)$ & $\Omega (n)$ & $O(n)$ & $\Omega (n)$ & $O(n\log n)$ &
$\Omega (n)$  \\\hline $c_{(u,v)}\in [0,1]$ & $O(\log n)$ & $\Omega (\log n)$ &
$O(\log^6 n)$ & $\Omega (\log n)$ &$O(n)$ & $\Omega (n)$ \\\hline $c_{(u,v)}\in
\{0,1\}$ & $O(1)$ & $\Omega (1)$ & $O(\log n)$ & $\Omega (1)$ &$O(n)$ & $\Omega
(n)$\\\hline $c_{(u,v)}\in \{1,\infty\}$ & $O(n\log n)$ & $\Omega(n)$ & $-$ & $-$ &
$-$ &$-$ \\\hline
\end{tabular}
\end{center}
\caption{Comparison between the expected costs of our algorithms and the
minimum certificates for sorting and selection for various cost functions. The
first row follows from standard algorithms and is given as a reference point
for comparison. Also, in the case of $c_{(u,v)}\in \{1,\infty\}$ we consider
finding all maximal/minimal elements.} \label{tab:results1}
\end{table}

We would like to note that for the first three rows of Table~\ref{tab:results1},
the expected cost of each comparisons is $1/2$ but the variance differs. For
selection type problems the variance makes a big difference since there are
many ways to certify the rank of an element. However for sorting there is only
one (minimal) certificate for the sorted order. Nevertheless, a little bit of
variance makes it possible to sort with only linear cost rather than $O(n\log
n)$ cost.

One of the main challenges in the analysis of our algorithms is to ensure that
the costs incurred at various stages of the algorithm are independent. We
achieve this by carefully designing the algorithms and describing an
alternative random process of cost assignment that we argue is equivalent to
the original random process of cost assignment.

\section{Preliminaries}
We are given a set $V$ of $n$ elements, drawn from some totally ordered set. We
are also given a non-negative symmetric function $c : V \times V \rightarrow
\mathcal{R}^+$ which determines the cost of comparing two elements of $V$.
Given $V$ and $c$, we are interested in designing algorithms for sorting and
selection that minimize the total cost of the performed comparisons.

The above setting is naturally described by the complete weighted graph on $V$,
call it $G$, where the weight $c_e$ of an edge $e$ is determined by the cost
function $c$. The direction of each edge $(u,v)$ in $G$ is consistent with the
underlying total order and is unknown unless the edge $e$ is \emph{probed},
i.e., the comparison between $u$ and $v$ is performed, or it is implied by
transitivity, i.e., a directed path between $u$ and $v$ is already revealed. In
this case we call $u$ and $v$ \emph{comparable}.

An algorithm for sorting or selection should reveal a \emph{certificate} of the
correctness of its output. In the case of sorting, the minimal certificate is
unique, namely the Hamiltonian path in $G$ between the largest and the smallest elements of $V$. In the case of selection, the certificate is a subgraph of $G$
that includes a (single) directed path between the element of the desired rank
and each of the remaining elements of $V$. In the special case of max-finding,
the certificate is a rooted tree on $V$, the maximum element being the root.
The \emph{cost} of a certificate is defined as the total cost of the included
edges.

In this paper we consider three different stochastic models for determining the
cost function $c$ (see Section \ref{sec:intro}). In Models (b) and (c), the
graphs induced respectively by the cost $0$ or $1$ edges have natural analogue
to random graphs with parameter $p$, denoted by $G_{n,p}$. Note that in Models
(a) and (b), the maximum cost of a comparison is $1$. When this is case, the
following proposition will be useful and follows from a natural greedy strategy
to find the maximum element in the standard comparison model.

\begin{proposition}\label{prop:findmax}
Given a set $V$ of $n$ elements, drawn from a totally ordered set, where the
cost of the comparison between any two elements is at most $1$, we can find (and
certify) the maximum element performing $n-1$ comparisons incurring a cost of
at most $n-1$.
\end{proposition}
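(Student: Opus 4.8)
The plan is to invoke the classical greedy ``running maximum'' procedure from the unit-cost comparison model and to observe that it meets all three requirements essentially for free. First I would fix an arbitrary enumeration $v_1,\ldots,v_n$ of $V$, maintain a candidate maximum $m$ initialized to $v_1$, and for each $i$ from $2$ to $n$ probe the edge between $m$ and $v_i$, replacing $m$ by $v_i$ whenever $v_i$ proves to be the larger of the two. The only property of the cost function that this strategy uses is the stated hypothesis that every comparison costs at most $1$.

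Next I would verify the three claims in turn. \emph{Correctness:} a standard invariant argument shows that after $v_i$ has been processed, $m$ equals the maximum of $\{v_1,\ldots,v_i\}$, so $m$ holds the global maximum once the scan completes. \emph{Certificate:} for each comparison I would record the directed edge pointing from the loser to the winner. Every element other than the final maximum loses exactly one comparison --- the one in which it is discarded as a candidate --- and so acquires out-degree exactly one toward a strictly larger element, whereas the eventual maximum loses none. Following these out-edges always leads to strictly larger elements and hence terminates at the global maximum, so the recorded edges form a rooted tree on $V$ with the maximum at the root; this is exactly the form of certificate for max-finding described in the preliminaries. \emph{Comparison count:} the loop performs one comparison for each $i\in\{2,\ldots,n\}$, i.e.\ exactly $n-1$ comparisons.

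Finally, the cost bound is immediate: each of the $n-1$ probed edges has weight at most $1$ by hypothesis, so the total cost incurred is at most $n-1$. I do not expect a genuine obstacle here, since the statement is a direct transcription of the classical max-finding algorithm; the one point deserving a line of care is confirming that the set of probed edges is a bona fide certificate --- a rooted tree spanning $V$ rather than an arbitrary subgraph --- which the out-degree-one observation above settles.
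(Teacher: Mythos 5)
Your proof is correct and is exactly the ``natural greedy strategy'' the paper invokes without writing out: the paper states the proposition with no detailed proof, and your running-maximum scan with the loser-to-winner tree as certificate is precisely the intended argument. No discrepancies.
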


We will measure the performance of our algorithms by comparing the expected
total cost of the edges probed with the expected cost of a minimum certificate.
Note that the cost of the minimum certificate is concentrated around the mean
in most cases. Even when the minimum certificate cost is far from the mean, we
can obtain good bounds on the expected ratio by using algorithms from
\cite{CFGKRS02} (Model (a)) or standard algorithms (Model (b) and (c)).

Finally, in the analysis, it would be often useful to number the elements of
$V$, $v_1, \cdots, v_n$ such that $v_1 < \cdots < v_n$. We also define the
\emph{rank} of an element $v$ with respect to a set $S \subseteq V$ to be
\[\rank_S(v)=|\{ u: u\leq v,u\in
S\}|\enspace .\]


\section{Uniform Comparison Costs}

In this section we will assume that the cost of each comparison is chosen
uniformly at random in the range $[0,1]$. We consider the problems of finding the maximum
or minimum elements, general selection, and sorting. The algorithms are
presented in Fig.~\ref{fig:uniformalgorithms}.

\begin{theorem}
The expected cost of \ref{alg:findmax} is at most $ 2(H_n-1)$ where
$H_k=\sum_{i=1}^k 1/i$.
\end{theorem}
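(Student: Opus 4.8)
The plan is to charge the total cost to the $n-1$ elements that get discarded and to bound each element's contribution by a low-order statistic of independent uniform costs. The procedure \ref{alg:findmax} is a greedy elimination: it repeatedly compares the two surviving candidates joined by the cheapest admissible edge and discards the loser, so that after $n-1$ comparisons a single candidate---the maximum---remains, and the comparisons performed form a valid certificate (a tree rooted at the maximum, as required by the definition in the preliminaries). Since every comparison eliminates exactly one element and $v_n$ is never eliminated, the total cost equals $\sum_{i=1}^{n-1} c(e_i)$, where $e_i$ is the comparison that eliminates $v_i$. Thus it suffices to show $\expec{c(e_i)} \le 2/(n-i+1)$ for each $i<n$.

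Next I would analyze $c(e_i)$ using the $n-i$ edges joining $v_i$ to the strictly larger elements $v_{i+1},\dots,v_n$, since $v_i$ can only be eliminated by losing to a larger element. The elimination of $v_i$ uses the cheapest edge from $v_i$ to a larger element that is still a candidate at that moment, so its cost should be no more than the \emph{second} smallest of these $n-i$ upward edge costs: the only way the cheapest upward edge fails to be the eliminating edge is that its endpoint has itself already been discarded, and a short charging argument is meant to show that one extra order statistic absorbs this slack. Using that the $j$-th smallest of $m$ independent $U[0,1]$ variables has mean $j/(m+1)$, the second smallest of $n-i$ such variables has mean $2/(n-i+1)$, which is exactly the target bound.

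The main obstacle is independence: the costs inspected in earlier rounds condition the costs available in later rounds (having probed the cheapest edges first, the surviving edges are no longer fresh uniforms), so the order-statistic computation above is not immediately justified. I would resolve this exactly as the introduction advertises, by replacing the up-front assignment of all $\binom{n}{2}$ costs with an equivalent lazy process that reveals an edge's cost only when the algorithm first needs it, and arguing that this deferred process is distributionally identical to the original while making the relevant upward edges of each $v_i$ genuinely independent uniforms at the time of elimination. This is where the factor of two is paid: it is the price of reasoning about the second rather than the first order statistic once the dynamics of the candidate set are accounted for.

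Finally I would sum the per-element bounds: $\sum_{i=1}^{n-1}\expec{c(e_i)} \le \sum_{i=1}^{n-1} 2/(n-i+1) = 2\sum_{m=2}^{n} 1/m = 2(H_n-1)$, giving the claimed bound. The delicate step is the independence/charging argument of the third paragraph; the order-statistics estimate and the final summation are routine.
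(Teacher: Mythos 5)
There is a genuine gap in the third step: the claim that the edge eliminating $v_i$ costs at most the \emph{second} smallest of the $n-i$ upward edge costs $c(v_i,v_{i+1}),\dots,c(v_i,v_n)$ is false, and the ``short charging argument'' you defer to cannot exist in the form you describe. You correctly observe that at the moment $v_i$ is eliminated, all of its upward edges to still-surviving candidates are unprobed (any probed upward edge would already have eliminated $v_i$), so the eliminating edge is the cheapest upward edge to a \emph{surviving} larger element. But it is not just the endpoint of the single cheapest upward edge that may have been discarded: arbitrarily many of the endpoints of $v_i$'s cheapest upward edges can be eliminated before $v_i$ is, pushing the eliminating edge to an arbitrarily high order statistic. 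Concretely, take $n=4$ with $c(v_2,v_4)=0.05$, $c(v_3,v_4)=0.15$, $c(v_1,v_2)=0.1$, $c(v_1,v_3)=0.2$, $c(v_1,v_4)=0.9$, $c(v_2,v_3)=0.5$. The algorithm probes $(v_2,v_4)$ (eliminating $v_2$), then $(v_3,v_4)$ (eliminating $v_3$), then $(v_1,v_4)$, so $v_1$ is eliminated at cost $0.9$, which is its \emph{third} (largest) upward order statistic, far exceeding the second one ($0.2$). Your per-element bound $\expec{c(e_i)}\le 2/(n-i+1)$ is therefore unsupported, and the independence/lazy-revelation discussion in your fourth paragraph, while a legitimate concern, does not address this structural failure --- the factor of $2$ does not arise from trading the first order statistic for the second.

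The paper avoids per-element charging entirely. It views the cost assignment as inducing a uniformly random permutation of the $\binom{n}{2}$ edges, scans edges in increasing cost order, and lets $T_r$ be the rank (in this global order) of the $r$th probed edge; the expected cost of that edge is then exactly $\expec{T_r}/\left(\binom{n}{2}+1\right)$ by the order-statistics formula applied to \emph{all} $\binom{n}{2}$ costs at once. The waiting time between probes is controlled by the probability $\binom{n-r+1}{2}/\binom{n}{2}$ that the next edge in the scan joins two surviving candidates, which is where the factor of $2$ actually enters. If you want to keep a per-element decomposition you would need to bound the expected cost of the eliminating edge of $v_i$ over the randomness of the whole process, not via a deterministic comparison to a fixed low order statistic; as it stands the key inequality of your argument is simply not true.
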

\begin{proof}
We analyze a random process where we consider edges one by one in a
non-decreasing order of their cost. Note that the costs of edges define a
random permutation on the edges. If an edge is incident to two candidate
elements, i.e., elements that have not lost so far a performed comparison, we
probe the edge, otherwise we ignore the edge. Either way we say the edge is
\emph{processed}.

We divide the analysis in rounds. A round terminates when an edge is probed.
After the end of a round, the number of candidates for the maximum decreases by
one. Therefore after $n-1$ rounds the last candidate would be the maximum
element. For $r\in [n-1]$, let $t_r$ denote the random variable which counts
the number of edges processed  in the $r$th round. Let $T_r = \sum_{i = 1}^{r}
t_i $ denote the rank of the edge (in the sorted by costs order) found in the
$r$th round. Therefore, the expected cost of the performed comparison is
$\expec{T_r}/({n \choose 2} + 1)$.

It remains to show an upper bound on the value of $\expec{T_r} = \sum_{i=1}^{r}
\expec{t_i}$. So far $T_{r-1}$ edges have been processed. The probability that
the next edge is between two candidate elements is $p = {n-(r-1) \choose 2}/
\left( {n \choose 2} - T_{r-1} \right) \geq  { {n-(r-1)} \choose 2}/ {n \choose
2 } $. Hence, for $r\in [n-2]$, $\expec{t_r} \leq 1/p \leq  {n \choose 2 }/{
{n-(r-1)} \choose 2} $, and for $r=n-1$, we have $\expec{T_{r} } \leq {n\choose
2}$. We conclude that the total expected cost is at most, \[ \sum_{r=1}^{n-1}
\frac{\expec{T_r}}{({n\choose 2}+1)}  \le  1+ \sum_{r=1}^ {n-2} \sum_{i=1}^{r}
\frac1{{n-(i-1) \choose 2}} \le 1+ \sum_{r=1}^{n-2} \frac{2}{n-r+1} =  2(H_n-1)
\enspace .
\]
\end{proof}

\begin{theorem}
The expected cost of the cheapest rank $k$ certificate is $H_{k}+H_{n-k+1}-2$.
\end{theorem}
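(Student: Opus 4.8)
The plan is to give an exact characterization of the minimum rank-$k$ certificate, split it into an ``upward'' part lying below $v_k$ and a ``downward'' part lying above $v_k$, and show that the optimal cost of each part is a sum of expected minima of independent uniform costs.

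First I would recall that a rank-$k$ certificate must exhibit a directed path from every $v_i$ with $i<k$ to $v_k$, and from $v_k$ to every $v_j$ with $j>k$. The key structural observation is that any directed path from $v_i$ (with $i<k$) to $v_k$ is strictly increasing, so all of its intermediate elements lie strictly between $v_i$ and $v_k$; consequently the edges witnessing the relations $v_i<v_k$ use only elements of $L=\{v_1,\dots,v_k\}$. Symmetrically, the edges witnessing $v_k<v_j$ use only elements of $U=\{v_k,\dots,v_n\}$. Since $L$ and $U$ share only the vertex $v_k$ and no edge, the minimum rank-$k$ certificate decomposes as the disjoint union of a minimum \emph{lower} certificate on $L$ (every element of $L$ has a path to $v_k$) and a minimum \emph{upper} certificate on $U$ (every element of $U$ is reachable from $v_k$). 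By linearity of expectation, the expected cost of the minimum certificate is then the sum of the two expected costs.

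Next I would pin down the minimum lower certificate. For each $i<k$, the element $v_i$ must retain at least one outgoing edge inside $L$, i.e.\ an edge to some $v_j$ with $i<j\le k$, to begin its path; these required out-edges are distinct edges for distinct sources, which gives the lower bound. Conversely, if each $v_i$ keeps exactly one out-edge to a strictly larger element of $L$, then following these edges strictly increases the index and hence terminates at $v_k$, so the result is a valid certificate; this gives the matching upper bound. Because the candidate up-edges of distinct source elements form disjoint edge sets, the cheapest lower certificate simply takes, for each $v_i$, the minimum-cost edge among the $k-i$ edges joining $v_i$ to a larger element of $L$, and these minima are independent. The expected minimum of $k-i$ independent $U[0,1]$ variables is $1/(k-i+1)$, so the expected lower cost is $\sum_{i=1}^{k-1} 1/(k-i+1)=H_k-1$. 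The identical argument on $U$ gives expected upper cost $\sum_{j=k+1}^{n}1/(j-k+1)=H_{n-k+1}-1$, and adding the two yields $H_k+H_{n-k+1}-2$.

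The main obstacle is the exactness argument rather than the final summation. I must verify carefully that (i) the edges of \emph{any} rank-$k$ certificate decompose cleanly across $L$ and $U$ with no shared edge, so no single probed edge can do double duty between the two halves, and (ii) within $L$ each source's mandatory out-edge is drawn from a set of candidate edges disjoint from every other source's, which is exactly what makes the minimum lower cost equal to $\sum_i \min_{i<j\le k} c_{(v_i,v_j)}$ and lets me replace a joint optimization over subgraphs by $k-1$ independent one-dimensional minima. Once these structural claims are in place, the expected-minimum-of-uniforms computation is routine.
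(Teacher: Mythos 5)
Your proof is correct and follows essentially the same route as the paper's: for each $v_i$ with $i<k$ the certificate must contain an edge from $v_i$ to one of $v_{i+1},\dots,v_k$, these candidate sets are disjoint, and the expected minimum of $k-i$ independent $U[0,1]$ costs is $1/(k-i+1)$, giving $H_k-1$ plus the symmetric $H_{n-k+1}-1$. You are somewhat more careful than the paper, which states only the lower-bound direction and leaves the achievability of the sum of per-element minima implicit.
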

\begin{proof}
Consider $v_i$ with $i<k$. Any certificate  must include a comparison with at
least one of $v_{i+1}, \ldots, v_k$. The expected cost of the minimum of these
$k-i$ comparisons is $\frac1{k-i+1}$. Summing over $i$, $i<k$, yields
$H_{k}-1$. Similarly, now consider $v_i$ with $i>k$. Any certificate  must
include a comparison with at least one of $v_{k}, \ldots, v_{i-1}$. The
expected cost of the minimum of these $i-k$ comparisons is $\frac1{i-k+1}$.
Summing over $i$, $n\geq i>k$, yields $H_{n-k+1}-1$. The theorem follows.
\end{proof}

Note that the theorem above also implies a lower bound of $\Omega(\log n)$ on
the expected cost of the cheapest certificate for the maximum (minimum)
element. To prove a bound on the performance of \ref{alg:selection} we need the
following preliminary lemma.

\begin{lemma}\label{thm:decendents}
Let $v\in V$ and perform each comparison with probability $p$. Then, with
probability at least $1-1/n^4$ (assuming $p>1/n^3$), for all $u$ such that
\[ |\rank_V u - \rank_V v| \ge \frac{150 \log n \left(\log n +\log (1/p)\right)}{p} \enspace ,\]
 the relationship between $u$ and $v$ is certified by the comparisons performed.
\end{lemma}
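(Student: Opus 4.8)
The plan is to treat the probed comparisons as the (order-directed) random graph $G_{n,p}$ and to certify the relationship between $u$ and $v$ by exhibiting a directed path between them. Assume without loss of generality that $\rank_V v < \rank_V u$ and write $d = \rank_V u - \rank_V v \ge \frac{150\log n(\log n + \log (1/p))}{p}$; the reverse case is symmetric. Let $a_0 = v, a_1, \ldots, a_d = u$ be the $d+1$ elements whose ranks lie in $[\rank_V v,\rank_V u]$, listed in increasing order. Every directed $v$--$u$ path in the probed graph is increasing and therefore uses only these intermediate-rank elements, so it suffices to find an increasing probed path $a_0 \to a_{i_1} \to \cdots \to u$ inside this interval; moreover the edges among $\{a_0,\ldots,a_d\}$ are probed independently with probability $p$, so I may work entirely within it.

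First I would track the \emph{upward-reachable set} $R$ from $v$ by a one-pass, layered exploration. Partition $a_1,\ldots,a_d$ into consecutive blocks of width $w = \Theta\big(\frac{\log n + \log (1/p)}{p}\big)$ and, proceeding block by block, let a new element join $R$ whenever some already-reached element of an earlier block has a probed edge to it. The analysis splits into two regimes. In the \emph{growth} regime, while the footprint of $R$ in the current block is small the expected footprint in the next block multiplies by roughly $wp$, so starting from the single seed $v$ the reachable set grows to size $\Omega\big(\frac{\log n}{p}\big)$ after a few blocks. In the \emph{saturation} regime, once the footprint reaches $\Omega\big(\frac{\log n}{p}\big)$, each element of the next block fails to attach to $R$ with probability at most $(1-p)^{\Omega(\log n/p)} \le n^{-\Omega(1)}$; a union bound over the $\le w \le n$ elements of that block makes the \emph{entire} block reachable with high probability, so $R$ stays dense and marches upward through every remaining block until it reaches the block containing $u$. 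The final edge onto the specific target $u$ is then supplied by the same saturation estimate applied to $u$ and the dense footprint in its predecessor block. The block width and the required footprint are each $\Theta\big(\frac{\log(n/p)}{p}\big) = \Theta\big(\frac{\log n}{p}\big)$ using $p>1/n^3$; combining this width with the number of growth and traversal steps, together with the union bounds over the $\le n$ blocks and the $\le n$ choices of $u$, drives the total failure probability below $1/n^4$, and the stated lower bound on $d$ provides more than enough intermediate room for this to go through (the hypothesis $p>1/n^3$ also keeps $\log(1/p) = O(\log n)$ and all probabilities in range).

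The main obstacle is that the indicator events ``$a_i \in R$'' are far from independent: reachability couples the outcomes of many edges, so a direct \chern{} bound on $|R|$ is unavailable. The crux is therefore to expose the randomness in a disciplined order dictated by the block structure — edges crossing from one block to the next are left untouched until that step is processed — so that, \emph{conditioned} on the reachable footprint in the current block, the attachments in the next block are genuinely independent and the per-block union bound above is legitimate; this is precisely the device of arranging that the randomness used at different stages is independent. The two remaining delicate points are the boundaries: bootstrapping $R$ from the single element $v$ up to a dense footprint in the growth regime (where the multiplicative estimate must be made rigorous, e.g.\ by stochastic domination by an independent branching-type process rather than by naive expectation), and landing the path on the specific target $u$ rather than merely reaching high rank, which the saturation density in $u$'s predecessor block resolves.
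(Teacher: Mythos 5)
Your plan follows essentially the same route as the paper's proof: partition the ranks near $v$ into blocks of width $\Theta(\log n/p)$, expose the edges layer by layer so that, conditioned on the current comparable set, the attachment events in the next block are independent Bernoulli trials amenable to a {\chern} bound, run a multiplicative-growth phase until the comparable set has size $\Omega(\log n/p)$, and then use the saturation estimate $(1-p)^{\Omega(\log n/p)}\le n^{-\Omega(1)}$ to connect every sufficiently distant element, finishing with union bounds. The only cosmetic difference is that the paper bounds the number of elements below $v$ that fail to become comparable all at once (so no separate union bound over targets $u$ is needed), whereas you build a path to each fixed $u$ inside the rank interval and union bound over $u$.
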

\begin{proof}
\emph{W.l.o.g.}, $\rank_V v\geq n/2$. We will consider elements in $S= \{u:
\rank_V u < \rank_V v\}$. The analysis for elements among $\{u: \rank_V u >
\rank_V v\}$ is identical and the result follows by the union bound. Throughout the proof we will assume that $n$ is sufficiently large.

Let $D$ be the subset of $S$ such that $u\in D$ is comparable to $v$. We
partition $S$ into sets,
\[B_i=\{u: \rank_V v-wi \leq  \rank_V u< \rank_V v-w(i-1)\} \enspace , \]
where $w=\frac{12 \log n}{p(1-e^{-1})}$. Let $X_i= D \cap B_i$, that is, the
elements from set $B_i$ that are comparable to $v$. For the sake of
notation, let $X_0=\{v\}$. Let $D_i=\bigcup_{0\leq j\leq {i-1}} X_j$. If we
perform a comparison between an element of $D_i$ and an element $u$ of $B_i$
then we certify that $u$ is  less than $v$. The probability that an element of
$B_i$ gets compared to an element of $D_i$ is,
\[
1-(1-p)^{D_i}\geq  1-e^{-pD_i} \geq (1-e^{-1})\max \{1,pD_i\}\enspace .
\]
Let $(Y_i)_{1\leq i}$ be a family of independent random variables distributed
as $\bindist(w,q)$ where $q=(1-e^{-1})\max \{pD_i,1\}$. Note that
$\expec{Y_i}=12D_i\log n$ if $pD_i\leq 1$.
\begin{enumerate}
\item
For $i$ such that $D_i<1/p$. Using the {\chern} Bounds,
\[ \prob{X_i< D_i \log n}=\prob{X_i<\frac{qw}{12}}\leq \prob{Y_i<\frac{\expec{Y}}{12}}\leq e^{-6 (11/12)^2 D_i\log n}\leq \frac{1}{n^5} \enspace .\]
In other words, the number of comparable elements increases by at least a $\log
n$ factor until $D_i\geq 1/p$. Therefore, with probability at least
$1-\log(1/p)/n^5$, for all $i$,
\[D_i\geq\min \{ 1/p, (\log n)^{i-1}\}  \enspace .\]
In particular, $D_{\log 1/p} \geq 1/p$.
\item
Assume that  $i>\log (1/p)$ and therefore  $D_{\log 1/p} \geq 1/p$. Using
{\chern} Bounds, we get
\[ \prob{X_i<\frac1p}\leq \prob{Y_i<\frac1p}\leq \prob{Y_i< \frac{\expec{Y_i}}{12\log n}}\leq e^{-(1-\frac1{12 \log n})^2 6 D_i \log n }\leq \frac1{n^5}\enspace .\]
\end{enumerate}
Therefore with probability at least $1-\frac{t}{n^5}$, $D_{t}\geq \frac{6\log
n}{p}$ where $ t=6\log n+\log (1/p)$.
 Consider an element
$u \in B_{t'}$ where $t' > t$.  The probability that $u$ is not in $D$ is
bounded above by $(1-p)^{6\log n /p}\leq 1/n^6$. Hence, with probability at
least $1-(1+t)/n^5$,
\[|S\setminus D| \leq wt\leq \frac{150 \log n \left(\log n+ \log
(1/p)\right)}{p}\enspace .\]
\end{proof}

\begin{figure}[t]
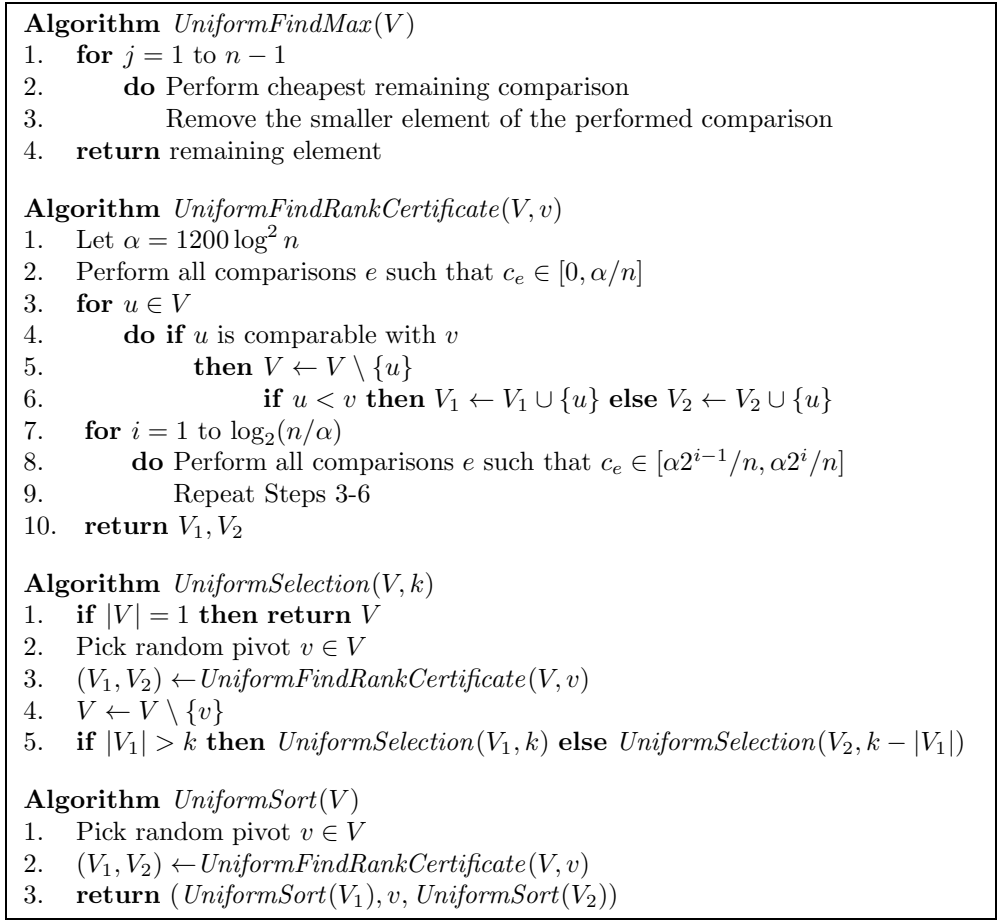

\begin{center}
\fbox{
\begin{minipage}{5in}
{
\begin{algorithm}{UniformFindMax}[V]{
\label{alg:findmax}}
\qfor $j=1$ to $n-1$\\
\qdo
Perform cheapest remaining comparison\\
Remove the smaller element of the performed comparison
\qrof \\
\qreturn remaining element
\end{algorithm}

\begin{algorithm}{UniformFindRankCertificate}[V,v]{
\label{alg:pivotcert}}
Let $\alpha=1200\log^2 n $\\
Perform all comparisons $e$ such that $c_e\in [0,\alpha / n ]$\\
\qfor $u \in V$ \\
\qdo
\qif $u$ is comparable with $v$ \\
\qthen $V\leftarrow V\setminus \{u\}$\\
\qqif $u < v$ \qqthen $V_1\leftarrow V_1\cup \{u\}$ \qqelse $V_2\leftarrow V_2\cup \{u\}$
\qrof
\qrof \\
\qfor $i=1$ to $\log_2 (n/\alpha)$\\
\qdo Perform all comparisons $e$ such that $c_e\in [\alpha 2^{i-1}/ n ,\alpha 2^{i}/ n ]$\\
Repeat Steps 3-6 \qrof \\
\qreturn $V_1, V_2$
\end{algorithm}

\begin{algorithm}{UniformSelection}[V,k]{
\label{alg:selection}}
\qqif $|V|=1$ \qqthen \qreturn $V$\\
Pick random pivot $v\in V$\\
$(V_1,V_2)\leftarrow $\ref{alg:pivotcert}$(V,v)$\\
$V\leftarrow V\setminus \{v\}$\\
\qqif $|V_1|>k$ \qqthen $\ref{alg:selection}(V_1,k)$ \qqelse
$\ref{alg:selection}(V_2,k-|V_1|)$
\end{algorithm}

\begin{algorithm}{UniformSort}[V]{ \label{alg:sort}}
Pick random pivot $v\in V$\\
$(V_1,V_2)\leftarrow $\ref{alg:pivotcert}$(V,v)$\\
\qreturn $\left(\ref{alg:sort}(V_1),v,\ref{alg:sort}(V_2)\right)$
\end{algorithm}
}
\end{minipage}
} \caption{ Algorithms for uniform comparison costs.
\label{fig:uniformalgorithms}}
\end{center}
\end{figure}

\begin{lemma}\label{lem:pivotcert}
Consider the algorithm \ref{alg:pivotcert} called on a randomly chosen $v$.
With probability at least $1-n^{-3}$ the algorithm returns a certificate for
the rank of $v$. The expected cost of the comparisons is $O(\log^5 n)$.
\end{lemma}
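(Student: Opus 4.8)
The plan is to analyze \ref{alg:pivotcert} phase by phase and to couple each phase with the random‑graph process of Lemma~\ref{thm:decendents}. Call phase $0$ the initial probing of edges of cost in $[0,\alpha/n]$, and phase $i\ge 1$ the probing of edges of cost in $I_i=[\alpha 2^{i-1}/n,\alpha 2^{i}/n]$. Let $V_i$ be the set of candidates (elements not yet certified against $v$) at the start of phase $i$, so $V_0=V$, and let $p_i$ be the probability that a fixed edge lands in the phase‑$i$ band, i.e.\ $p_0=\alpha/n$ and $p_i=\alpha 2^{i-1}/n$ for $i\ge 1$. The two things to prove are that the candidate sets shrink geometrically (an event of probability $1-n^{-3}$) and that, on this event, the per‑phase cost is $O(\log^4 n)$.

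I would first dispose of correctness, which is in fact deterministic. The bands $[0,\alpha/n],[\alpha/n,2\alpha/n],\dots,[1/2,1]$ tile $[0,1]$, so any pair of elements that remain candidates throughout eventually has its edge probed. In particular, for any $u$ that is never certified, the direct comparison $(u,v)$ is probed in whichever phase its cost falls into (both $u$ and $v$ being candidates then), which certifies $u$ against $v$; hence no candidate survives all phases and the returned pair $(V_1,V_2)$ is always a valid rank certificate for $v$. The role of the $1-n^{-3}$ event is therefore solely to control the \emph{cost}.

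The heart of the argument, and the step I expect to be the main obstacle, is establishing $|V_{i+1}|=O(n/2^{i})$ by invoking Lemma~\ref{thm:decendents} inside each phase; the difficulty is exactly the \emph{independence across phases} flagged in the introduction. The point is that $V_i$ is a function only of the probed edges of cost at most $\alpha 2^{i-1}/n$ together with their directions, whereas an edge among $V_i$ is new in phase $i$ precisely when its cost lies in $(\alpha 2^{i-1}/n,\alpha 2^{i}/n]$ (both endpoints being candidates throughout). Conditioned on the history through phase $i-1$, hence on $V_i$, each not‑yet‑probed edge of $V_i$ has cost uniform on $(\alpha 2^{i-1}/n,1]$ \emph{independently of $V_i$}, so the new edges form a clean $G_{|V_i|,p_i}$ that is independent of $V_i$ — exactly what Lemma~\ref{thm:decendents} requires. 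With failure probability at most $n^{-4}$ it then certifies every $u\in V_i$ with $|\rank_{V_i}u-\rank_{V_i}v|\ge 150\log n(\log n+\log(1/p_i))/p_i$ (the already‑present cheaper edges only help). Substituting $p_i=\alpha 2^{i-1}/n$ with $\alpha=1200\log^2 n$ makes this threshold at most $n/2^{i+1}$, giving $|V_{i+1}|\le 2W_i=O(n/2^{i})$; a union bound over the $O(\log n)$ phases keeps the total failure probability below $n^{-3}$.

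Finally I would bound the expected cost by summing over phases, again exploiting the same conditional independence. For a fixed pair $(a,b)$, conditioning on $c_{(a,b)}>\alpha 2^{i-1}/n$ decouples the event $\{a,b\in V_i\}$ (which depends only on cheaper edges) from the value of $c_{(a,b)}$, so the phase‑$i$ cost factors as
\[
\expec{\mathrm{Cost}_i}=\sum_{(a,b)}\expec{c_{(a,b)}\,\mathbf{1}[c_{(a,b)}\in I_i]}\,\prob{a,b\in V_i}=\tfrac32\,\frac{\alpha^2 4^{i-1}}{n^2}\,\expec{\tbinom{|V_i|}{2}}.
\]
On the shrinkage event $\expec{\tbinom{|V_i|}{2}}=O(n^2/4^{i-1})$, so each phase costs $O(\alpha^2)=O(\log^4 n)$, and summing over the $O(\log n)$ phases gives $O(\log^5 n)$. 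The complementary event has probability at most $n^{-3}$ and cost at most $\tbinom{n}{2}$, contributing only $O(n^{-1})$, so the expected cost is $O(\log^5 n)$. I expect the only delicate part to be the bookkeeping in the independence step — isolating the randomness on which $V_i$ depends — since the Chernoff estimates are already packaged inside Lemma~\ref{thm:decendents}.
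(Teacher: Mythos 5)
Your proof is correct and follows essentially the same route as the paper's: the same phase decomposition, the same invariant $|V_i|=O(n/2^{i-1})$ obtained by applying Lemma~\ref{thm:decendents} with $p_i=\alpha 2^{i-1}/n$, and the same $O(\log^4 n)$ per-phase cost bound summed over $O(\log n)$ phases. You spell out the cross-phase conditional-independence coupling and the bad-event contribution more explicitly than the paper does (and correctly observe that correctness of the returned certificate is in fact deterministic), but these are refinements of the same argument rather than a different approach.
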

\begin{proof}
Let $V_i$ be the set of elements at the start of iteration $i$. Let
$p_1=\alpha/n$ be the probability that $c_e\in  [0 ,\alpha / n ]$. For $i>1$,
let $p_i=\alpha 2^{i-1}/n$ be the probability that $c_e\in  [\alpha 2^{i-1}/ n
,\alpha 2^{i}/ n ]$. First we show that, with probability at least $1-
\frac{\log_2 (n/\alpha)}{n^4} $, for all $1\leq i \leq \log_2 (n/\alpha)$,
$|V_i|<n/2^{i-1}$. Assume that $|V_i|<n/2^{i-1}$. Appealing to Lemma
\ref{thm:decendents}, there are less than
\[
\frac{300\log |V_i| (\log |V_i|+\log(1/p))}{p}\leq \frac{600\log^2 n }{\alpha
2^{i-1}/n} = |V_i|/2\] elements in $V_{i+1}\setminus V_i$ and hence the
$|V_{i+1}|< n/2^{i}$. It remains to show that the cost per iteration is
$O(\log^4 n)$. This follows since the expected number of comparisons is
$O(V_{i}^2 \alpha 2^{i}/n)=O(\alpha n/2^i)$ and each comparison costs at most
$\alpha 2^{i}/ n$.
\end{proof}

The following theorem can be proved using standard analysis of the appropriate recurrence relations and Lemma \ref{lem:pivotcert}.

\begin{theorem}
The algorithm \ref{alg:selection} can be used to select the $k$\emph{th}
element. The expected cost of the  certificate is $O(\log^6 n)$. The algorithm
\ref{alg:sort}  returns a sorting certificate with expected cost $O(n)$.
\end{theorem}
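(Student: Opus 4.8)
The plan is to read \ref{alg:selection} and \ref{alg:sort} as randomized quickselect and quicksort, respectively, in which the partitioning step around a random pivot $v$ is carried out by the certificate routine \ref{alg:pivotcert}. By Lemma \ref{lem:pivotcert}, one invocation on a set of size $m$ costs $c(m)=O(\log^5 m)$ in expectation and, with probability at least $1-m^{-3}$, returns a complete partition $(V_1,V_2)$ together with a certificate for the rank of $v$. I would first dispose of two degenerate cases: a subproblem of size below a fixed $\polylog(n)$ threshold, and the rare event that \ref{alg:pivotcert} fails to classify every element. In both I would fall back on brute force, probing all remaining pairs; since every cost is at most $1$ this costs at most $\binom{m}{2}\le m^2$, so the failure event at a node of size $m$ contributes $O(m^{-3}\cdot m^2)=O(1/m)$ in expectation, and a union bound over the $O(n)$ nodes makes these contributions negligible against the target bounds.

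For selection I would write the usual quickselect recurrence. The pivot has uniform rank, so with probability at least $1/2$ it lands in the middle half and the surviving subproblem has size at most $3m/4$; using monotonicity of $T$ this gives $T(m)\le c(m)+\tfrac12 T(3m/4)+\tfrac12 T(m)$, i.e.\ $T(m)\le 2c(m)+T(3m/4)$. Unrolling bounds the total by $2\sum_{t\ge 0} c\bigl((3/4)^t n\bigr)$; there are only $O(\log n)$ nonvanishing terms, each at most $O(\log^5 n)$, so the sum is $O(\log^6 n)$. Equivalently, the expected recursion depth is $O(\log n)$ and each level pays at most $O(\log^5 n)$.

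For sorting the recursion pivots each element once, so the total cost is $\sum_{\text{nodes}} c(\text{size})$, which satisfies $T(n)=c(n)+\frac{2}{n}\sum_{j=0}^{n-1}T(j)$. The point I would emphasize is that, unlike ordinary quicksort where $c(m)=\Theta(m)$ produces the $\Theta(n\log n)$ term, here $c(m)=O(\log^5 m)$ is only polylogarithmic. Applying the standard differencing trick to $nT(n)-(n-1)T(n-1)$ reduces the recurrence to $\frac{T(n)}{n+1}=\frac{T(n-1)}{n}+\frac{nc(n)-(n-1)c(n-1)}{n(n+1)}$, and since $nc(n)-(n-1)c(n-1)=O(\log^5 n)$, telescoping gives $\frac{T(n)}{n+1}=O\bigl(\sum_k \log^5 k/k^2\bigr)=O(1)$ because that series converges; hence $T(n)=O(n)$. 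The same fact can be read combinatorially: there are $O(n/m)$ subproblems of size about $m$, and $\sum_j (n/2^j)\,\polylog(2^j)=O(n)$ since $\sum_j j^5/2^j$ converges.

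The main obstacle, and where I expect to spend the most care, is justifying that $c(m)=O(\log^5 m)$ is a legitimate per-call bound at \emph{every} node, not only at the root. Applying Lemma \ref{lem:pivotcert} inside a recursive call requires that the comparison costs among the elements of a subproblem still behave like fresh independent uniforms after conditioning on everything revealed above --- that these elements fell on one side of earlier pivots and the outcomes of comparisons already probed. I would handle this using the equivalent cost-assignment process of the paper (the costs are independent of the underlying order, and only edges not yet probed within a subproblem matter), so that linearity of expectation over the random recursion tree lets me assemble the per-node bounds into the recurrences above without needing full mutual independence across levels. With that in place, the recurrence solutions deliver the stated $O(\log^6 n)$ certificate for \ref{alg:selection} and the $O(n)$ sorting certificate for \ref{alg:sort}.
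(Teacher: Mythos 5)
Your proposal is correct and takes essentially the same route as the paper: the paper offers no detailed proof, stating only that the theorem follows from Lemma \ref{lem:pivotcert} by ``standard analysis of the appropriate recurrence relations'' together with the remark that a failed call to \ref{alg:pivotcert} is repaired by revealing all remaining edges at negligible expected cost. Your quickselect/quicksort recurrences, the brute-force fallback on failure, and the appeal to the alternative cost-assignment process are precisely the details the paper is alluding to.
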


Note that we can check if a certificate is a valid one without performing any additional comparisons. In the case when \ref{alg:pivotcert} fails, we can reveal all edges to obtain a certificate without increasing asymptotically the overall expected cost.

\begin{theorem}\label{thm:uniformcostcert}
The expected cost of the cheapest sorting certificate is $(n-1)/2$.
\end{theorem}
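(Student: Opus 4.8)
The plan is to exploit the uniqueness of the minimal sorting certificate established in the Preliminaries. Recall that for sorting the minimal certificate is the Hamiltonian path in $G$ between the smallest and largest elements, namely the edge set $\{(v_i,v_{i+1}) : 1 \le i \le n-1\}$ where $v_1 < v_2 < \cdots < v_n$. The first step is to argue that the \emph{cheapest} certificate is in fact exactly this same set of $n-1$ edges. Each edge $(v_i,v_{i+1})$ joining consecutive elements is irredundant: any directed path in a certificate from $v_i$ to $v_{i+1}$ is strictly increasing, so an intermediate vertex $w$ would satisfy $v_i < w < v_{i+1}$; as no element lies strictly between consecutive elements, the only path is the direct edge. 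Hence $v_i < v_{i+1}$ cannot be inferred by transitivity from any other comparisons, and every valid certificate must contain this edge. Since all costs are non-negative, including any edge beyond these $n-1$ can only increase the total cost, so the cheapest certificate is precisely the Hamiltonian path.

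The crucial point is that this edge set is determined entirely by the underlying total order and is therefore \emph{fixed}, independent of the random cost assignment. Thus the cost of the cheapest certificate is the sum of exactly $n-1$ costs $c_{(v_i,v_{i+1})}$, each an independent uniform random variable on $[0,1]$ with expectation $1/2$. By linearity of expectation,
\[
\expec{\sum_{i=1}^{n-1} c_{(v_i,v_{i+1})}} = \sum_{i=1}^{n-1} \expec{c_{(v_i,v_{i+1})}} = (n-1)\cdot\frac12 = \frac{n-1}{2} \enspace .
\]

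There is essentially no technical obstacle here, and the only point requiring care is the observation that, unlike in the selection problems, there is no optimization over the choice of certificate. For selection many different subgraphs certify the same rank, so the minimum cost is a genuine minimum over a random collection of candidate certificates and its expectation is subtle; this is precisely why the expected cost of the cheapest rank-$k$ certificate came out as a sum of harmonic numbers rather than a simple product. For sorting the certificate is forced, so linearity of expectation suffices and no concentration or minimum-of-variables argument is needed (and independence of the costs, while it holds, is not even used).
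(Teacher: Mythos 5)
Your proof is correct and follows the same approach as the paper: observe that every certificate must contain each edge $(v_i,v_{i+1})$ since consecutive elements cannot be compared by transitivity, and apply linearity of expectation over the $n-1$ edges, each of expected cost $1/2$. The additional detail you give about why the minimal certificate is forced is implicit in the paper's one-line argument.
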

\begin{proof}
For each $1\leq i\leq n-1$ there must be a comparison between $v_i$ and
$v_{i+1}$. The expected cost of each is $1/2$. The theorem follows by linearity
of expectation.
\end{proof}

\section{Boolean Comparison Costs}

In this section we assume that comparisons are for free with probability $p$
and have cost $1$ otherwise. We consider the problems of finding the maximum or minimum
elements, general selection, and sorting. The algorithms for maximum finding
and selection are presented in Fig.~\ref{fig:01algorithms}. For sorting we use
results from \cite{ABBJ94} and \cite{KK03} to obtain a bound on the number of
comparisons needed to sort the random partial order defined by the free
comparisons.

\begin{figure}[t]
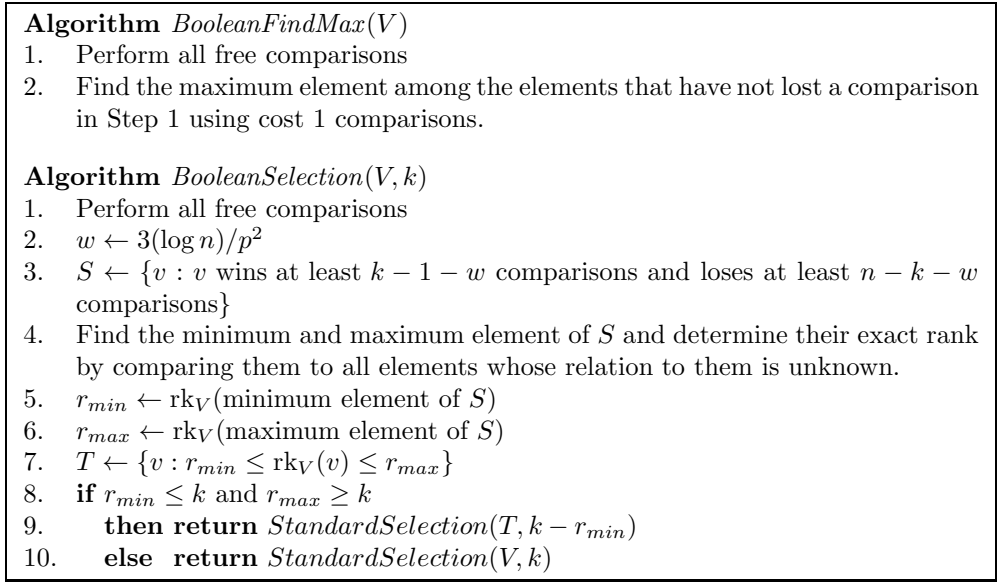

\begin{center}
\fbox{
\begin{minipage}{5in}
{
\begin{algorithm}{BooleanFindMax}[V]
{\label{alg:findmax01}}
Perform all free comparisons\\
Find the maximum element among the elements that have not lost a comparison in
Step 1 using cost 1 comparisons.
\end{algorithm}

\begin{algorithm}{BooleanSelection}[V,k]{
\label{alg:selection01}}
Perform all free comparisons\\
$w\leftarrow 3(\log n)/p^2$\\
$S \leftarrow \{v: v \textrm{ wins at least } k-1-w$ comparisons and loses at least $n-k-w$ comparisons$\}$\\
Find the minimum and maximum element of $S$ and determine their exact rank by comparing them to all elements whose relation to them is unknown. \\
$r_{min} \leftarrow \rank_V(\textrm{minimum element of } S)$\\
$r_{max} \leftarrow \rank_V(\textrm{maximum element of } S)$\\
$T \leftarrow \{v:  r_{min} \leq \rank_V(v)  \leq r_{max} \} $ \\
\qif $r_{min} \leq k$ and $r_{max} \geq k$ \\
\qthen \qreturn $StandardSelection(T, k-r_{min})$ \\
\qelse \qreturn $StandardSelection(V, k)$
\end{algorithm}
%
}
\end{minipage}
} \caption{Algorithms for boolean comparison costs \label{fig:01algorithms}}
\end{center}
\end{figure}

\begin{theorem}\label{thm:boolmaxsort}
The expected cost of \ref{alg:findmax01} is $1/p-1$ as $n\rightarrow \infty$.
\end{theorem}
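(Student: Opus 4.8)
The plan is to separate the two steps of \ref{alg:findmax01} and observe that all the cost lives in Step 2. Step 1 performs only free comparisons, so it contributes nothing; the entire cost of the algorithm equals the cost of the cost-$1$ comparisons used to find the maximum among the surviving \emph{candidates}, i.e., the elements that have not lost any free comparison. So it suffices to understand the size of the candidate set and the cost of finding the maximum within it.

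First I would characterize the candidates. Number the elements $v_1 < \cdots < v_n$. The element $v_i$ loses a free comparison exactly when some edge $(v_i,v_j)$ with $j>i$ is free. Since each edge is free independently with probability $p$, the element $v_i$ survives Step 1 precisely when all $n-i$ edges to larger elements have cost $1$, an event of probability $(1-p)^{n-i}$. Writing $C$ for the set of candidates and applying linearity of expectation gives
\[
\expec{|C|} = \sum_{i=1}^{n} (1-p)^{n-i} = \sum_{k=0}^{n-1}(1-p)^{k} = \frac{1-(1-p)^n}{p} \enspace .
\]

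Next I would show that Step 2 incurs cost exactly $|C|-1$. The key observation is that no two candidates can be joined by a free edge: if $v_i$ and $v_j$ with $i<j$ were both candidates and $(v_i,v_j)$ were free, then $v_i$ would have lost that comparison, contradicting its being a candidate. Hence every comparison performed among candidates has cost $1$. Moreover, finding and certifying the maximum among $|C|$ mutually incomparable elements requires exactly $|C|-1$ comparisons: Proposition~\ref{prop:findmax} supplies the upper bound, and any certificate for the maximum is a tree on the candidates, which has $|C|-1$ edges. Thus Step 2 costs precisely $|C|-1$.

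Combining these facts, the expected cost of \ref{alg:findmax01} is $\expec{|C|-1} = \frac{1-(1-p)^n}{p} - 1$, which tends to $1/p-1$ as $n\to\infty$. The point deserving the most care is the claim that Step 2 costs \emph{exactly} $|C|-1$ rather than merely at most that much, since the stated bound is an equality; this relies entirely on the incomparability observation that forces every candidate-versus-candidate comparison to be a cost-$1$ edge, and is where I would concentrate the argument.
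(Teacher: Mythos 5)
Your proposal is correct and follows essentially the same route as the paper: compute the expected number of elements surviving the free comparisons, $\sum_{i}(1-p)^{i-1}=\frac{1-(1-p)^n}{p}\to 1/p$, and then charge $|C|-1$ cost-$1$ comparisons for Step 2 via Proposition~\ref{prop:findmax}. Your added observation that no two candidates share a free edge (so every Step-2 comparison genuinely costs $1$) is a worthwhile point of care that the paper leaves implicit.
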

\begin{proof}
Consider the $i$th \emph{largest} element. The probability that there is no free
comparison to a larger element is $(1-p)^{i-1}$. Hence, after performing all
the free comparisons, the expected number of non-losers, in the limit as $n$ tends to infinity, is 
\[\lim_{n\rightarrow \infty} \sum_{i=1}^{n}(1-p)^{i-1}=\lim_{n\rightarrow \infty}  \frac{1 - (1-p)^n}{p} = 1/p   \enspace .\]
 Hence, by Proposition \ref{prop:findmax}, the expected number of
comparisons of cost 1 that are necessary is $1/p-1$.
\end{proof}
The theorem above leads to an immediate corollary:
\begin{corollary}
The expected cost of the cheapest certificate for the maximum element and the
element of rank $k$ is $\Omega (1/p)$  as $n\rightarrow \infty$.
\end{corollary}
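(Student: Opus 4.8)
The plan is to derive both lower bounds from the structure uncovered in the proof of Theorem~\ref{thm:boolmaxsort}: after all free comparisons are performed, the candidates for the maximum are exactly the \emph{non-losers} (elements that win all of their free comparisons), and their expected number tends to $1/p$. The key observation is that any certificate must spend a distinct cost-$1$ comparison on each non-loser that is not itself the extremal element.

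First consider the maximum. A certificate for the maximum is a tree rooted at the maximum in which every non-root element points to a parent it loses to. Take a non-loser $u$ that is not the true maximum. The edge from $u$ to its parent in any certificate certifies that $u$ loses \emph{some} comparison; since $u$ loses none of its free comparisons, this edge must have cost $1$. These parent edges are distinct across non-losers, because each non-root has a unique parent edge, so any certificate contains at least (number of non-losers) $-1$ cost-$1$ edges. Taking expectations and invoking the computation in the proof of Theorem~\ref{thm:boolmaxsort}, where the expected number of non-losers is $\frac{1-(1-p)^n}{p}\to 1/p$, the expected cost of the cheapest certificate is at least $1/p-1=\Omega(1/p)$.

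For the element $v_k$ of rank $k$, I would localize this argument to whichever side of $v_k$ is large. A rank-$k$ certificate must include, for each element, a directed path to $v_k$. A path from $v_k$ to a smaller element $v_j$ is strictly decreasing and hence uses only elements of $L=\{v_1,\dots,v_k\}$; thus the certificate restricted to $L$ certifies that $v_k$ is the maximum of $L$, and restricted to $U=\{v_k,\dots,v_n\}$ certifies that $v_k$ is the minimum of $U$. Since the free comparisons inside $L$ (resp.\ inside $U$) are again an independent copy of $G_{|L|,p}$ (resp.\ $G_{|U|,p}$), the non-loser lower bound above applies verbatim to each side. Because $\max(k,\,n-k+1)\ge \lceil (n+1)/2\rceil\to\infty$, at least one side grows without bound, and applying the bound to that side yields expected certificate cost at least $1/p-1=\Omega(1/p)$.

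I expect the only real care to lie in the two structural claims feeding the count: that each non-loser forces a \emph{distinct} cost-$1$ edge (so the bound is additive over non-losers rather than shared among them), and that order-consistent paths are monotone, which is what confines a rank-$k$ certificate to a genuine max-certificate on $L$ and min-certificate on $U$. Both are immediate once one notes that certificates are (reverse) arborescences and that directed paths respect the total order, after which the expectation bound is inherited directly from Theorem~\ref{thm:boolmaxsort}.
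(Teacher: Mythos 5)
Your argument is correct and is exactly the (implicit) reasoning the paper intends: the corollary is derived from the non-loser count in the proof of Theorem~\ref{thm:boolmaxsort}, with each non-loser forcing a distinct cost-$1$ edge in any certificate. Your localization to the larger of $\{v_1,\dots,v_k\}$ and $\{v_k,\dots,v_n\}$ for the rank-$k$ case, via monotonicity of directed certifying paths, is the right way to make the rank-$k$ half precise; the paper leaves this entirely to the reader.
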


Using Theorem~\ref{thm:boolmaxsort}, we obtain a sorting algorithm with
expected cost of at most $(1/p-1)(n-1)$ by repeating $n-1$ times
\ref{alg:findmax01}. We improve this result (for sufficiently small $p$) by
observing that the free comparisons define a random partial order on the $n$
elements, call it $G_{n,p}$. In \cite{ABBJ94}, the expected number of linear
extensions of $G_{n,p}$ was shown to be 
\[\prod_{k=1}^n \frac{1-(1-p)^k}{p} \le
\frac{1}{p^{n-1}} \enspace .\]
 A conjecture, independently proposed by
Kislitsyn~\cite{Kislitsyn68}, Fredman~\cite{F76}, and Linial~\cite{Linial84},
states that given a partial order $P$, there is a comparison between two
elements such that the fraction of extensions of $P$ where the first elements
precedes the second one is between $1/3$ and $2/3$. Ignoring running time, this
would imply sorting with cost $\log_{3/2} e(P)$, where $e(P)$ denotes the
number of linear extensions of $P$. In \cite{KS84}, a weaker version of the
conjecture was shown giving rise to an efficient, via randomization
\cite{DyerFK89}, sorting algorithm with cost $\log_{11/8} e(P)$. Taking a
different approach, Kahn and Kim~\cite{KK95} described a deterministic
polynomial time, $O(\log e(P))$ cost algorithm to sort any partial order $P$.

Combining the above results, and using Jensen's inequality, we obtain a sorting
algorithm with expected cost at most,
$$\log_{\frac{11}{8}} e(G_{n,p})
\le (\log_{\frac{11}{8}} p^{-1}) (n-1) \enspace .$$ Note that for
$p<0.1389$, $\log_{11/8}(1/p) < 1/p - 1$. Combining the two sorting methods, we
obtain the following theorem.

\begin{theorem}
There is a sorting algorithm for the Boolean Comparison Model with expected
cost of $\min\{\log_{\frac{11}{8}}1/p,1/p-1\} \cdot (n-1)$.
\end{theorem}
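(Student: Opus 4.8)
The plan is to build the two sorting algorithms separately, bound the expected cost of each, and then observe that since $p$ is given to the algorithm we may simply run whichever of the two is cheaper. The claimed bound $\min\{\log_{11/8}1/p,\,1/p-1\}\cdot(n-1)$ is exactly the minimum of the two per-algorithm bounds, so once both are in hand the theorem is immediate.

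First I would analyze the repeated max-finding algorithm. Running \ref{alg:findmax01} once on a set of size $m$ finds and certifies the maximum; by the argument behind Theorem~\ref{thm:boolmaxsort} the expected number of cost-$1$ comparisons is $\frac{1-(1-p)^m}{p}-1\le 1/p-1$. Removing the maximum and repeating yields a full sorting certificate after $n-1$ rounds. The one point to check is that the rounds compose correctly: because which element is the maximum is determined by the (cost-independent) total order, deleting it leaves the remaining comparison costs distributed exactly as a fresh instance on $m-1$ elements, so the per-round bound applies unconditionally and linearity of expectation gives total cost at most $(1/p-1)(n-1)$.

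Second I would analyze the partial-order approach. Performing all free comparisons (at cost $0$) reveals the random partial order $P=G_{n,p}$. I would then feed $P$ to the sorting procedure guaranteed by the weaker $1/3$--$2/3$ result of \cite{KS84}, made efficient via randomization in \cite{DyerFK89}, which completes the sort using $\log_{11/8}e(P)$ comparisons. Every comparison it performs is on a pair incomparable in $P$, and such a pair must have a cost-$1$ direct edge --- otherwise the free comparison would already have been performed and the pair would be comparable --- so each of these comparisons costs exactly $1$, giving cost at most $\log_{11/8}e(P)$. Taking expectations, the main step is to pass from the expected logarithm to the logarithm of the expectation: since $x\mapsto\log x$ is concave, Jensen's inequality gives $\expec{\log_{11/8}e(G_{n,p})}\le\log_{11/8}\expec{e(G_{n,p})}$, and the \cite{ABBJ94} bound $\expec{e(G_{n,p})}\le p^{-(n-1)}$ then yields expected cost at most $(n-1)\log_{11/8}(1/p)$.

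With both bounds established, the algorithm inspects $p$ (which is known) and runs the max-finding method when $1/p-1\le\log_{11/8}(1/p)$ and the partial-order method otherwise, achieving expected cost $\min\{\log_{11/8}1/p,\,1/p-1\}\cdot(n-1)$. The only genuinely delicate step is the expectation interchange in the second method: the number of comparisons is a random variable controlled by $e(G_{n,p})$, and I would be careful to apply the Jensen bound to the \emph{expected} extension count rather than attempting to bound $e(G_{n,p})$ pointwise, since $p^{-(n-1)}$ is an upper bound on the mean and not on every realization.
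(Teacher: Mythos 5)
Your proposal is correct and follows essentially the same route as the paper: repeated \ref{alg:findmax01} gives the $(1/p-1)(n-1)$ bound, the $\log_{11/8}e(P)$ sorting procedure of \cite{KS84} applied to the free partial order combined with the \cite{ABBJ94} bound on $\expec{e(G_{n,p})}$ via Jensen's inequality gives the $(n-1)\log_{11/8}(1/p)$ bound, and knowing $p$ lets you run the cheaper one. Your explicit checks (that successive max-finding rounds compose because the identity of the maximum is independent of the edge costs, and that Jensen must be applied to the mean of $e(G_{n,p})$ rather than a pointwise bound) are points the paper leaves implicit, but the argument is the same.
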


The proof of the following theorem about the cheapest sorting certificate is
nearly identical to that of Theorem \ref{thm:uniformcostcert}.

\begin{theorem}
The expected cost of the cheapest sorting certificate is $(1-p)(n-1)$.
\end{theorem}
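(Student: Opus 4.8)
The plan is to follow the structure of the proof of Theorem~\ref{thm:uniformcostcert} essentially verbatim, since the only quantity that changes between the uniform model and the Boolean model is the expected cost of an individual edge. First I would invoke the fact, recorded in the preliminaries, that the minimal sorting certificate is unique: it is the Hamiltonian path $v_1 - v_2 - \cdots - v_n$ obtained by joining each consecutive pair of elements. Because this minimal certificate is forced, the \emph{cheapest} sorting certificate must coincide with it---there is no freedom to substitute cheaper edges---so its cost is exactly the sum of the costs of the $n-1$ edges $(v_i,v_{i+1})$.

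The one point worth spelling out, and the only step with any content, is why each of these $n-1$ edges must appear in every valid certificate. Fix $1\le i\le n-1$. Any sorting certificate must allow the relation $v_i < v_{i+1}$ to be inferred, either by a direct probe or through a revealed directed path. But any intermediate vertex $w$ on a directed path from $v_i$ to $v_{i+1}$ would satisfy $v_i < w < v_{i+1}$, and no element lies strictly between two consecutive elements. Hence the edge $(v_i,v_{i+1})$ must be probed directly, so it is mandatory; conversely these $n-1$ edges already determine the total order, confirming that they form precisely the minimal certificate.

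With the set of probed edges pinned down, the cost computation is immediate. In the Boolean model the cost of each edge $(v_i,v_{i+1})$ is a $\{0,1\}$-valued random variable equal to $0$ with probability $p$ and $1$ otherwise, so its expected cost is $1-p$; this is the sole difference from the uniform case, where the per-edge expectation was $1/2$. By linearity of expectation the total expected cost is $(1-p)(n-1)$, as claimed. There is essentially no obstacle: the argument is a direct transcription of Theorem~\ref{thm:uniformcostcert} with $1/2$ replaced by $1-p$, the only step requiring care being the (already established) observation that the consecutive-pair edges are forced.
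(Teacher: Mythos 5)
Your proposal is correct and matches the paper's argument, which simply notes that every sorting certificate must contain the comparison between each consecutive pair $(v_i,v_{i+1})$, each such edge has expected cost $1-p$ in the Boolean model, and linearity of expectation gives $(1-p)(n-1)$. The extra detail you supply on why the consecutive-pair edges are forced is exactly the justification the paper leaves implicit.
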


We next present our results for selection.

\begin{theorem}\label{thm:p^-2}
The algorithm \ref{alg:selection01} can be used to select the $k$\emph{th}
element. The expected cost of the algorithm is $O(p^{-2}\log n)$.
\end{theorem}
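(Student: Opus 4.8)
The plan is to show that the free comparisons already pin down the rank of $v_k$ to within an additive $O(p^{-2}\log n)$, so that the expensive cost-$1$ comparisons are confined to a small band of ranks around $k$. Write $\mathrm{win}(v)$ and $\mathrm{lose}(v)$ for the number of elements that $v$ is known (through the free comparisons and transitivity) to be greater than, respectively less than, after Step~1 of \ref{alg:selection01}. Since $\mathrm{win}(v)\le \rank_V v-1$ and $\mathrm{lose}(v)\le n-\rank_V v$, the defining inequalities of $S$ force every $v\in S$ to satisfy $k-w\le \rank_V v\le k+w$; hence, \emph{deterministically}, $|S|\le 2w+1$ and the recursion set $T=\{v:r_{min}\le\rank_V v\le r_{max}\}$ satisfies $|T|\le 2w+1=O(p^{-2}\log n)$. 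The whole argument then reduces to two claims: that with high probability $v_k\in S$ (so the algorithm recurses on $T$ rather than on $V$), and that the minimum and maximum of $S$ can each be ranked in Step~4 using only $O(w)$ cost-$1$ comparisons.

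Both claims follow from a single comparability estimate. For $i<j$, each element $v_\ell$ with $i<\ell<j$ yields an independent two-step witness $v_i<v_\ell<v_j$, present with probability $p^2$, so
\[
\prob{v_i\text{ and }v_j\text{ are incomparable}}\le (1-p^2)^{j-i-1}\le e^{-p^2(j-i-1)} .
\]
The threshold $w=3(\log n)/p^2$ is calibrated exactly to this bound: at rank-distance $j-i\ge w$ the right-hand side is at most $e\, n^{-3}$. Taking a union bound over all pairs, the event $E$ that \emph{every} pair at rank-distance at least $w$ is comparable has $\prob{\neg E}\le {n\choose 2}\,e\,n^{-3}=O(1/n)$. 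This is the analogue for Model~(b) of Lemma~\ref{thm:decendents}, but calibrated to the looser distance $O(p^{-2}\log n)$ that the selection algorithm can afford; it is meaningful precisely in the regime $w\le n$.

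Conditioned on $E$, every element is incomparable to fewer than $2w$ others, namely only those within rank-distance $w$. In particular the two elements $s_{min},s_{max}$ singled out in Step~4 each have fewer than $2w$ unknown relations, so Step~4 performs $O(w)$ cost-$1$ comparisons. Applying the same fact to $v_k$, its numbers of incomparable smaller and larger elements are each at most $w-1$, so $\mathrm{win}(v_k)\ge k-1-w$ and $\mathrm{lose}(v_k)\ge n-k-w$; thus $v_k\in S$, which gives $r_{min}\le k\le r_{max}$ and sends the algorithm into the $T$-branch. Since $|T|=O(w)$, the concluding call to StandardSelection on $T$ uses $O(|T|)=O(w)$ comparisons, each of cost at most $1$. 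Hence on $E$ the total cost is $O(w)=O(p^{-2}\log n)$.

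It remains to absorb the failure event. On $\neg E$ the algorithm may fall into the StandardSelection$(V,k)$ branch and may compare $s_{min},s_{max}$ to all of $V$, but both cost only $O(n)$; since $\prob{\neg E}=O(1/n)$ this contributes $O(1)$ to the expectation. Combining the two cases gives expected cost $O(p^{-2}\log n)$, as claimed. The one genuine subtlety -- the theme flagged in the introduction -- is that $s_{min}$ and $s_{max}$ are random elements selected by the same free comparisons whose comparability we must control, so we cannot treat them as fixed; I expect this to be the main obstacle, and I resolve it not by an independence argument but by making the comparability statement $E$ hold \emph{uniformly} over all pairs, after which the random choice of $s_{min},s_{max}$ is harmless.
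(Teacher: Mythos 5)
Your proposal is correct and follows essentially the same route as the paper's proof: the same two-step-witness estimate $1-(1-p^2)^{j-i-1}$ with the same threshold $w=3(\log n)/p^2$, the same deterministic containment of $S$ in a rank band of width $O(w)$, and the same accounting that charges the $O(n)$ fallback to an event of probability $O(n^{-1})$. Your packaging of the argument into a single uniform comparability event $E$ (which also cleanly handles the randomness of $s_{min},s_{max}$) is a slightly tidier presentation of the paper's per-element union bounds, but not a different argument.
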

\begin{proof}
We want to bound the size of set $S$ as defined in the algorithm. Fix an
element $v_j$. For an element $v_i$ such that $i < j$, let $l=j-i-1$. Consider
the event that we can infer $v_i<v_j$ from the free comparisons because there
exists an element $v_{i'}$ such that $v_i<v_{i'}<v_k$ and
$c_{(v_i,v_{i'})}=c_{(v_{i'},v_j)}=0$. The probability of this event is
$1-(1-p^2)^l$ and hence with probability at least $1-1/n^3$ we learn $v_i<v_j$
if $l \geq w=3(\log n)/p^2$. Therefore, with probability at least $1-1/n^2$,
$v_j$ wins at least $j-1-w$ comparisons. Similarly with probability at least
$1-1/n^2$, $v_j$ loses at least $n-j+w$ comparisons.

Hence, with probability $1-2/n^2$, every element from the set \[ S' =\{ v: k-w
\leq  \rank_V v \leq k+w \} \enspace , \] belongs to the set $S$ and in
particular the element of rank $k$ also belongs to $S$. Note that no element
from outside $S'$ can belong to $S$ and hence $|S|\leq 2w$. By Proposition
\ref{prop:findmax}, it takes $O(w)$ comparisons to compute the minimum and
maximum elements in $S$. There are at most $2w$ elements incomparable to the
minimum (maximum) element with probability at least $1-2/n^2$ and hence the
expected cost for determining the exact rank of minimum (maximum) element from
$S$ is bounded by \[2w(1-2/ n^2) + (n-1)2/n^2 \,=\, O(w)\] in expectation.  Since
the size of $T$ is also $O(w)$, step $5$ takes $O(w)$ time if $v_k \in T$,
which happens with probability at least $1-2/n^2$, and $O(n)$ otherwise.
Similar to the previous step, the expected cost is $O(w)$.
\end{proof}

Note that with a slight alteration to the \ref{alg:selection01} algorithm it is
possible to improve upon Theorem~\ref{thm:p^-2} if $p $ is much smaller than
$1/\log n$. Namely, setting \[w= 150p^{-1} \log n \log (n/p)\enspace ,\] and
appealing to Lemma~\ref{thm:decendents} in the analysis, gives an expected cost
of $$O\left( p^{-1} \log n \log (n/p)\right) \enspace .$$

\section{Unit and Infinite Comparison Costs}

In this section we consider the setting where only a subset of the comparisons
is allowed. More specifically, each comparison is allowed with probability $p$
(has cost 1) and is not allowed otherwise (has infinite cost). Here, the
underlying total order might not be possible to infer even if all comparisons
are performed. This is because, for example, adjacent elements can be compared
only with probability $p$. Hence, even the maximum element might not be
possible to certify exactly. We therefore relax our goals to finding maximal
elements and inferring the poset defined by the edges of cost $1$. In what
follows, we present algorithms for finding a maximal element as well as all
maximal elements (see Fig.~\ref{fig:partordalgorithms}). We consider an element
maximal if it wins (directly or indirectly) all allowed comparisons to its
neighbors.

\begin{figure}[t]
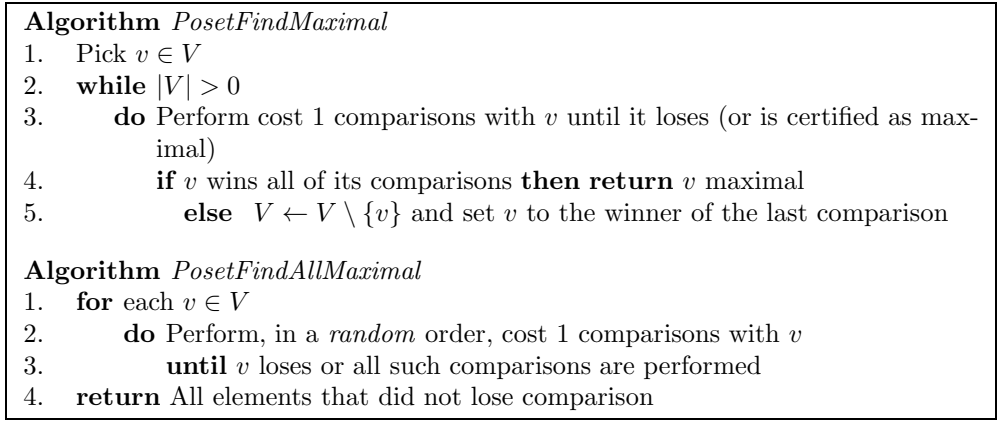

\begin{center}
\fbox{
\begin{minipage}{5in}
{
\begin{algorithm}{PosetFindMaximal}{
\label{alg:findmax1inf}}
Pick $v \in V$\\
\qwhile $|V| > 0$\\
\qdo Perform cost 1 comparisons with $v$ until it loses (or is certified as maximal)\\
\qif $v$ wins all of its comparisons \qqthen \qreturn $v$ maximal\\
\qelse $V \leftarrow V \setminus \{v\}$ and set $v$ to the winner of the last
comparison
\end{algorithm}

\begin{algorithm}{PosetFindAllMaximal}{
\label{alg:findallmax1inf}}
\qfor each $v\in V$\\
\qdo Perform, in a \emph{random} order, cost 1 comparisons with $v$\\
\textbf{until} $v$
loses or all such comparisons are performed \qrof \\
\qreturn All elements that did not lose comparison
\end{algorithm}
}
\end{minipage}
} \caption{Algorithms for $1/\infty$ comparison costs
\label{fig:partordalgorithms}}
\end{center}
\end{figure}

\begin{theorem}
The expected cost of the cheapest certificate for all maximal elements is
\[\Omega\left(n(1 - (1-p)^{n-1})\right)\enspace .\]
\end{theorem}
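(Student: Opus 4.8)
The plan is to recognize the target quantity $n(1-(1-p)^{n-1})$ as, up to a constant factor, the expected number of \emph{non-isolated} vertices in the random graph $G_{n,p}$ of cost-$1$ edges. A fixed vertex $v$ has all $n-1$ of its potential comparisons absent with probability $(1-p)^{n-1}$, so it is non-isolated with probability $1-(1-p)^{n-1}$; letting $M$ denote the number of non-isolated vertices in a realization, linearity of expectation gives $\expec{M}=n(1-(1-p)^{n-1})$. The entire argument then reduces to showing that any certificate for the set of all maximal elements must spend at least one unit of cost ``near'' every non-isolated vertex.

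Concretely, the key step I would establish is: for every fixed realization of $G_{n,p}$ and every consistent total order, a valid certificate must contain at least one edge incident to each non-isolated vertex $v$. I would argue this by cases. If $v$ is maximal, then (since $v$ is non-isolated) it has a neighbor $u$, and certifying that $v$ is maximal forces the certificate to imply $v>u$; because all revealed relations are consistent with the underlying total order, inferring $v>u$ requires a revealed directed path from $v$ down to $u$, whose first edge is incident to $v$. If instead $v$ is non-maximal, the certificate must imply that $v$ loses to some neighbor $u'$, i.e. $u'>v$, which requires a directed path ending at $v$, whose last edge is incident to $v$. The crucial point, which I expect to be the main thing to pin down carefully, is an information-theoretic one: if no revealed edge is incident to $v$, then the certificate conveys nothing about $v$'s position, so neither $v>u$ nor $u'>v$ can be inferred — hence an incident edge is unavoidable in either case.

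Given this, the count is immediate. Let $E$ be the edge set of a cheapest certificate. Summing degrees over the incidences, $2|E|=\sum_v \deg_E(v)\ge \sum_{v\text{ non-isolated}} \deg_E(v)\ge \sum_{v\text{ non-isolated}} 1 = M$, so $|E|\ge M/2$; equivalently, the certificate edges form an edge cover of the non-isolated part of $G_{n,p}$, which needs at least $M/2$ edges. Since each edge costs $1$, the cheapest certificate for a given realization costs at least $M/2$. Taking expectations over the random graph and invoking the computation above yields expected cost at least $\expec{M}/2=\tfrac12\,n(1-(1-p)^{n-1})$, which is $\Omega\!\left(n(1-(1-p)^{n-1})\right)$ as claimed. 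Isolated vertices, which are vacuously maximal and whose isolation is already known to the algorithm, require no certificate edges and are correctly ignored by this bound.
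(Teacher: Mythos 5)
Your proposal is correct and takes essentially the same route as the paper: compute the expected number of non-isolated vertices in the cost-$1$ graph, argue that every such vertex must be incident to at least one certificate edge, and divide by two since each edge covers two vertices. The paper's proof is simply terser, asserting without your explicit maximal/non-maximal case analysis that each non-isolated element requires at least one comparison.
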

\begin{proof}
In this setting, each element that has no edges of cost 1 incident to it is a
maximal element. In expectation, there are $n(1-p)^{n-1}$ such elements. For
each of the remaining elements we need to do at least one comparison. Note that each comparison satisfies this requirement for two elements.
Therefore, we need to do at least $\frac{1}{2}(n - n(1-p)^{n-1})$ comparisons
in expectation. 
\end{proof}

\begin{theorem}
The expected cost of \ref{alg:findallmax1inf} is $O(n \log n)$. The expected
cost of \ref{alg:findmax1inf} is at most $n-1$.
\end{theorem}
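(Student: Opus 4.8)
The plan is to treat the two algorithms separately, since the bound for \ref{alg:findmax1inf} is a deterministic counting argument whereas the bound for \ref{alg:findallmax1inf} requires averaging over the random graph $G_{n,p}$.

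For \ref{alg:findmax1inf} I would in fact prove the stronger, deterministic statement that at most $n-1$ comparisons are probed on any instance. Every probe is a comparison involving the current candidate and has a unique loser, so I would charge each probe to its loser and show that each of the $n$ elements is charged at most once. The key structural fact is that the candidate only ever increases in the total order: when a candidate $v$ loses, it loses to the element $u$ that becomes the next candidate, so $v<u$. Hence once an element $x$ first loses a comparison --- either as a beaten candidate or as a neighbour beaten by the candidate --- it lies strictly below the current candidate and below all subsequent candidates, so every later comparison involving $x$ is implied by transitivity and is never probed. Thus $x$ is the loser of at most one probe; since the element finally certified as maximal never loses, at most $n-1$ probes occur.

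For \ref{alg:findallmax1inf} I would first condition on the graph and compute, for each fixed $v$, the expected number of probes it initiates. Let $d_v$ be the number of neighbours of $v$ in $G_{n,p}$ and $L_v$ the number of those neighbours that are larger than $v$. Since the algorithm examines the neighbours of $v$ in uniformly random order and halts at the first larger one, the number of probes for $v$ equals the position of the first ``larger'' neighbour in a random arrangement of $d_v$ items containing $L_v$ marked ones. The expected value of this position is $(d_v+1)/(L_v+1)$, which also dominates the cost $d_v$ incurred when $L_v=0$. Because a comparison already revealed in an earlier iteration is never re-probed, bounding the cost of each iteration by the number of probes it would make in isolation can only inflate the total; hence the expected cost is at most $\sum_v \expec{(d_v+1)/(L_v+1)}$.

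The crux is evaluating this sum over the randomness of $G_{n,p}$. Writing $d_v=L_v+S_v$, where $S_v$ is the number of neighbours of $v$ smaller than $v$, I would exploit that $L_v$ and $S_v$ are \emph{independent} binomials, $L_v\sim\bindist(n-\rank_V v,\,p)$ and $S_v\sim\bindist(\rank_V v-1,\,p)$, to get $\expec{(d_v+1)/(L_v+1)}=1+\expec{S_v}\,\expec{1/(L_v+1)}$. Applying the identity $\expec{1/(1+\bindist(m,p))}=(1-(1-p)^{m+1})/((m+1)p)\le 1/((m+1)p)$, the factors of $p$ cancel and the per-element bound reduces to $1+(\rank_V v-1)/(n-\rank_V v+1)$, which is independent of $p$. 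Summing over ranks $j=1,\dots,n$ turns $\sum_j (j-1)/(n-j+1)$ into $nH_n-n$ and yields a total of $nH_n=O(n\log n)$. I expect the main obstacle to be this calculation: one must justify the independence of $S_v$ and $L_v$ (they involve disjoint sets of potential edges), apply the harmonic-number identity correctly, and argue cleanly that ignoring previously-revealed comparisons gives a legitimate upper bound rather than an underestimate.
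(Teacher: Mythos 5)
Your proof is correct, and for \ref{alg:findallmax1inf} it takes a genuinely different route from the paper's. The paper fixes $v$ and replaces the cost assignment by an equivalent two-stage process: first draw the number $t\sim\bindist(n-1,p)$ of cost-$1$ edges at $v$, then reveal their endpoints one at a time uniformly at random among the undetermined ones. This makes each successively probed neighbour larger than $v$ with probability at least $(\#\mathrm{larger})/(n-1)$, so a geometric-series bound gives at most $(n-1)/(\#\mathrm{larger})$ probes per element and $(n-1)H_{n-1}+(n-1)$ in total. You instead condition on the realized graph, use the exact expectation $(d_v+1)/(L_v+1)$ for the position of the first marked item in a uniformly random permutation, and only then average over $G_{n,p}$, exploiting the independence of the larger- and smaller-neighbour binomials and the identity $\expec{1/(1+\bindist(m,p))}\le 1/((m+1)p)$. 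The cancellation of $p$ is the same phenomenon driving both proofs, but your version sidesteps the equivalent-process construction entirely (the algorithm's own random ordering is independent of the graph by fiat), handles the two boundary issues correctly (the $L_v=0$ case is dominated by $(d_v+1)/1$, and previously revealed comparisons only reduce the cost relative to the isolated-iteration count), and yields the marginally cleaner constant $nH_n$. For \ref{alg:findmax1inf}, the paper merely cites Proposition~\ref{prop:findmax}; your charging argument (each probe has a unique loser, each element loses at most once because the candidate sequence is increasing and later comparisons with beaten elements are implied by transitivity) fills in that omitted reasoning and in fact establishes the stronger deterministic bound of $n-1$ probes on every instance.
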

\begin{proof}
We first analyze \ref{alg:findallmax1inf}. Fix an element $v$. Let $i =
\rank_V(v)$. Consider the following equivalent random process that assigns costs ($1$ or $\infty$)
to edges in the following way:
\begin{enumerate}
\item Pick $t$ from a random variable $T$ distributed as $\bindist(n-1,p)$.
\item Repeat $t$ times: Assign cost 1 to a random edge adjacent to $v$ whose cost has not yet been determined.
\item Declare the cost of all other edges adjacent to $v$ to be $\infty$.
\item For each remaining graph edge assign cost 1 with probability $p$ and $\infty$ otherwise.
\end{enumerate}
We may assume that the algorithm probes the cost 1 edges in this order until
$v$ loses a comparison or until all cost 1 edges are revealed. If $v$ has not
lost a comparison, there is a probability of at least $(i-1)/(n-1)$ that $v$
loses the next performed comparison. Hence, the expected number of comparisons
involving $v$ is
\[ \sum_t \prob{T=t} \sum_{j=1}^{t}\frac{i-1}{n-1} \left ( 1-\frac{i-1}{n-1}\right )^{j-1} j
\leq\sum_t \prob{T=t} \frac{n-1}{i-1} \leq \frac{n-1}{i-1} \enspace .\]
Therefore, by linearity of expectation the total number of comparisons we
expect to do is bounded above by $(n-1)H_{n-1}+(n-1)$.

The second part of the theorem follows easily from Proposition 1. The algorithm
\ref{alg:findmax1inf} is given for completeness.
\end{proof}

\section{Conclusions and Open Questions}

We have presented a range of algorithms for finding cheap sorting/selection
certificates in three different stochastic priced-information models. Most of
our algorithms are optimal up to constants and the remaining algorithms are
optimal up to poly-logarithmic terms (for constant values of the parameter $p$). Beyond improving the
existing algorithms there are numerous ways to extend this work. In particular,
\begin{itemize}
\item What about the price model in which the comparison costs are chosen in
an adversarial manner but the order of the elements is randomized?

\item In this work we have compared expected cost of minimum certificates to expected cost of the algorithms presented.
Is it possible to design algorithm which are optimal in the sense that
the expected cost of the certificate found is minimal over all
algorithms? Perhaps this would admit an information theoretic
approach.
\end{itemize}

Finally, this work was partially motivated by the game theoretic framework  described
in Section~\ref{sec:gametheory}. A full treatment of this problem was beyond
the scope of the present work. However, the problem seems natural and
deserving of further investigation.

{

\bibliographystyle{abbrv}
\bibliography{sorting}

\begin{thebibliography}{10}

\bibitem{ABFKNO94}
N.~Alon, M.~Blum, A.~Fiat, S.~Kannan, M.~Naor, and R.~Ostrovsky.
\newblock Matching nuts and bolts.
\newblock In {\em SODA}, pages 690--696, 1994.

\bibitem{ABBJ94}
N.~Alon, B.~Bollob\'as, G.~Brightwell, and S.~Janson.
\newblock Linear extensions of a random partial order.
\newblock {\em Annals of Applied Probability}, 4:108--123, 1994.

\bibitem{B93}
G.~Brightwell.
\newblock Models of random partial orders.
\newblock pages 53--83, 1993.

\bibitem{CFGKRS02}
M.~Charikar, R.~Fagin, V.~Guruswami, J.~M. Kleinberg, P.~Raghavan, and
  A.~Sahai.
\newblock Query strategies for priced information.
\newblock {\em J. Comput. Syst. Sci.}, 64(4):785--819, 2002.

\bibitem{CL05a}
F.~Cicalese and E.~S. Laber.
\newblock A new strategy for querying priced information.
\newblock In H.~N. Gabow and R.~Fagin, editors, {\em STOC}, pages 674--683.
  ACM, 2005.

\bibitem{CL05b}
F.~Cicalese and E.~S. Laber.
\newblock An optimal algorithm for querying priced information: Monotone
  boolean functions and game trees.
\newblock In G.~S. Brodal and S.~Leonardi, editors, {\em ESA}, volume 3669 of
  {\em Lecture Notes in Computer Science}, pages 664--676. Springer, 2005.

\bibitem{DyerFK89}
M.~E. Dyer, A.~M. Frieze, and R.~Kannan.
\newblock A random polynomial time algorithm for approximating the volume of
  convex bodies.
\newblock In {\em STOC}, pages 375--381. ACM, 1989.

\bibitem{F76}
M.~L. Fredman.
\newblock How good is the information theory bound in sorting?
\newblock {\em Theor. Comput. Sci.}, 1(4):355--361, 1976.

\bibitem{F85}
A.~M. Frieze.
\newblock Value of a random minimum spanning tree problem.
\newblock {\em J. Algorithms}, 10(1):47--56, 1985.

\bibitem{GK01}
A.~Gupta and A.~Kumar.
\newblock Sorting and selection with structured costs.
\newblock In {\em FOCS}, pages 416--425, 2001.

\bibitem{GK05}
A.~Gupta and A.~Kumar.
\newblock Where's the winner? {M}ax-finding and sorting with metric costs.
\newblock In C.~Chekuri, K.~Jansen, J.~D.~P. Rolim, and L.~Trevisan, editors,
  {\em APPROX-RANDOM}, volume 3624 of {\em Lecture Notes in Computer Science},
  pages 74--85. Springer, 2005.

\bibitem{KK95}
J.~Kahn and J.~H. Kim.
\newblock Entropy and sorting.
\newblock {\em J. Comput. Syst. Sci.}, 51(3):390--399, 1995.

\bibitem{KS84}
J.~Kahn and M.~Saks.
\newblock Balancing poset extensions.
\newblock {\em Order}, 1:113--126, 1984.

\bibitem{KK03}
S.~Kannan and S.~Khanna.
\newblock Selection with monotone comparison cost.
\newblock In {\em SODA}, pages 10--17, 2003.

\bibitem{Kislitsyn68}
S.~S. Kislitsyn.
\newblock A finite partially ordered set and its corresponding set of
  permutations.
\newblock {\em Mathematical Notes}, 4:798 -- 801, 1968.

\bibitem{KMS98}
J.~Koml{\'o}s, Y.~Ma, and E.~Szemer{\'e}di.
\newblock Matching nuts and bolts in ${O}(n \log n)$ time.
\newblock {\em SIAM J. Discrete Math.}, 11(3):347--372, 1998.

\bibitem{Linial84}
N.~Linial.
\newblock The information-theoretic bound is good for merging.
\newblock {\em SIAM J. Comput.}, 13(4):795--801, 1984.

\end{thebibliography}
\end{document}

}